\title{Multiparty Protocol that Usually Shuffles}
\author{Dhaneshwar Mardi}
\author{Surbhi Tanwar}
\author{Jaydeep Howlader}
\affil[1]{Department, Computer Science and Technology Durgapur, West Bengal, 713209, India}
\runningauthor{Dhaneshwad Mardi et al.}
\begin{document}

\begin{frontmatter}
\maketitle

\begin{abstract}
Multiparty computation is raising importance because it's primary objective is to replace any trusted third party in the distributed computation. This work presents two multiparty shuffling protocols where each party, possesses a private input, agrees on a random permutation while keeping the permutation secret. The proposed shuffling protocols are based on permutation network, thereby data-oblivious. The first proposal is $n\text{-}permute$ that permutes $n$ inputs in all $n!$ possible ways. $n$-permute network consists of $2\log{n}-1$ layers, and in each layer there are $n/2$ gates. Our second protocol is $n_{\pi}$-permute shuffling that defines a permutation set $\Pi=\{\pi_1,\dots,\pi_N\}$ where $|\Pi| < n!$, and the resultant shuffling is a random permutation $\pi_i \in \Pi$. The $n_{\pi}$-permute network contains leases number of layers compare to $n$-permute network. Let $n=n_1n_2$, the $n_{\pi}$-permute network would define $2\log{n_1}-1+\log{n_2}$ layers.
\par
The proposed shuffling protocols are unconditionally secure against malicious adversary who can corrupt at most $t<n/3$ parties. The probability that adversary can learn the outcome of $n$-permute is upper bound by $((n-t)!)^{-1}$. Whereas, the probability that adversary can learn the outcome of $n_{\pi}$-permute is upper bounded by $\big(f_{\Pi}(n_1-\theta_1)^{n_2}2^{\theta_2}\big)^{-1}$, for some positive integer $\theta_1, \theta_2$, and a recursive definition of $f_{\Pi}$. The protocols allow the parties to build quorums, and distribute the load among the quorums.

\keywords{Anonymity, Privacy, Multiparty Computation, Data Oblivious Operation, Permutation Network}
\end{abstract}
\end{frontmatter}

\section{Introduction}\label{intro}
Shuffling is a process that produces a random permutation of an indistinguishable input sequence. Shuffling is secure if the underlying permutation remains secret. A multiparty shuffling (MPS) is a protocol where parties collaboratively execute the shuffling protocol, but no subset of colluded parties (up to certain threshold) is unable to learn the underlying permutation. There are generally two variations of MPC. Firstly, each party possesses a secret input. Secondly, each party possesses the shares of other secrets. The first setting can easily be plugged into the second setting - where parties make a secret sharing of their secrets among the other parties. MPS is one of the primitive operation for many privacy-preserving applications, like anonymous communication \cite{ChaumMix,MixCutShuffle}, personalized browsing \cite{WebBrowsing}, e-voting and e-auction \cite{voting,auction}, online gaming \cite{CardShuffle,poker,mentalpoker,GSR-Shuffle}, private data outsourcing \cite{outsource}, and oblivious RAM designing \cite{ORAM}. 
\par
MPS was primarily studied in anonymous communication, known as {\em Mixnet}. Chaum \cite{ChaumMix} introduced the {\em Mixnet} as a network consisting of a chain of servers, called mixing nodes. Each mixing node receives a batch of encrypted messages then decrypts (or re-encrypts) individual message, performs a random permutation, and forwards the batch to the next mixing node. The final output is unlinkable to the input when at least one of the mixing node's permutation remains secret. Shuffling is the composition of permutations along the chain of the mixing nodes. Shuffling is verifiable \cite{HowToShuffle,VerifiableMIX} if there is  a mechanism to prove the correctness of the output sequence. {\em Mixnet} relies on  the cryptographic-primitives, e.g., factorization or computational discrete logarithm problem. Therefore verifiable {\em Mixnets} \cite{PracticalMix,RobustMix,Re-MIX} incur the additional cost of communication and computation. Moreover, there are some attacks like - traffic analysis \cite{SurveyMix,Traffic}, insertion/deletion attack \cite{Reputation}, which have shown some weakness in the unlinkability property. 

Chaum \cite{DCNet} also proposed the {\em DCnet}\footnote{A multiparty protocol based on the Dining Cryptographers Problem.} for anonymous communication. {\em DCnet} is free from cryptographic-primitives and eliminates the problem of traffic analysis. {\em DCnet} is primarily designed with honest parties. {\em DCnet} with adversary was addressed in \cite{Revisit}. Unfortunately, {\em DCnet} suffers from collision and jamming attacks \cite{Revisit}.
\par
{\em Mixnet} and {\em DCnet} are asynchronous in nature. This is an advantage as well as disadvantage of the system. The advantageous part is that - nodes operate independently. Every mixing node performs three operations: receives, processes, and forwards. Mixing nodes do not require any communicate with others during their execution. The mixing node only interacts with its predecessor and successor. The disadvantage is - failure of any node may result in the total failure of the system. To withstand the mixing node failure, {\em Mixnet} implements threshold cryptography \cite{ThresholdEncryption,DKG} that generates a public key for encryption where the corresponding private key is shared among the parties in a t-n (read t-out of-n) threshold sharing scheme. Any subset of $t$ parties can perform the {\em Mixnet} decryption. Bayer and Diaconis \cite{seven} reported that $\frac{3}{2} \log{n}$ shuffles were sufficient to mix up a deck of $n$ arranged cards. Bayer-Diaconis's result intuitively amortizes the upper bound of the number of mixing nodes in the {\em Mixnet}.

MPS is designed as a data-oblivious algorithm where the operations are independent of the input data. Data-oblivious algorithms are classified into two categories: 
\begin{itemize}
    \item {\bf Network Model:} A network model is a fixed wired network of oblivious switches (alternatively called gates) that transfers the input to the output. Examples of network models are sorting network \cite{Batcher,Batcher2}, permutation networks \cite{clos}, etc. 
    \item {\bf Randomization Model:} Randomization model, on the other hand, hides the data access pattern of an algorithm by distributing every access instruction (e.g., read and write) over the whole memory block. Examples of randomization based models are ORAM \cite{ORAM}, oblivious looked-up table \cite{Launchbury}, etc.
\end{itemize}
Most of the MPS protocols are based on the network model \cite{shuffle2,HowToShuffle}.

\subsection{Multiparty Shuffling and Data-Oblivious Algorithm}
Let $I=\{i_1,\dots, i_n\}$ be the set of inputs taken randomly from a uniform distribution. For example, cipher-text of any IND-CCA or information-theoretic encryption. Sorting of $I$  produces a random permutation of the input. Thus shuffling can be realized by oblivious sorting. Movahedi et al. \cite{shuffle2} proposed an MPS protocol based sorting network \cite{Leighton}.

Alternative approach is to route through a permutation network. A network with $n$-inputs $n$-output is rearrangeable if for all permutation $\pi$ of $\{1,\dots,n\}$, there exist edge disjoint paths to connect any input $i$ to the output $\pi(i)$. A rearrangeable network consists of configurable switches of two inputs, called {\ttfamily swap-gate}. The {\ttfamily swap-gate} randomly swaps the inputs. A Bene\v{s} network with $n$ inputs is an well-known $(2\log{n}-1)$ layers rearrangeable network, and able to permute the inputs in all possible $n!$ ways \cite{Waksman,LeeKL}. Some similar types of $2\log{n}$ layers rearrangeable networks are baseline-baseline$^{-1}$, omega-omega$^{-1}$ \cite{Yeh}.

\subsection{Related work}
Despite of several applications of MPS, to the best of our knowledge, MPS has not been explored to a great extent. Laur et al. \cite{RoundEfficient} studied MPS in the context of database privacy. In \cite{RoundEfficient}, three shuffling mechanisms were proposed:
\begin{enumerate}
    \item For $n$ ordered element  $\langle x_1,\dots,x_n\rangle$, parties define a permutation matrix $M$ such that the sum of the elements along every rows and column is $1$. Shuffling is computed as $\langle x_1,\dots,x_n \rangle\times M$
    \item Sorting $n$ elements obliviously.
    \item Share the elements to a trusted third party who shuffles and delivers the elements privately.
\end{enumerate}
The above mechanisms were primarily designed with honest participants. Non-interactive zero-knowledge proofs \cite{Pedersen,Feldman} were applied to ensure verifiability and correctness of shuffling.
\par
Movahedi et al. \cite{shuffle2} proposed a scalable protocol for MPS. In their protocol, parties generate random shared-seeds and pair the elements with the shared-seeds. Let $(r,i)$ be a pair, where $r$ and $i$ are the random-seen and parties' input, respectively. The protocol obliviously sorts the pairs according to the random shared-seeds and thereby obtains a random shuffling.

\subsection{Our Contribution}
In this paper, we propose two MPS protocols based on the rearrangeable network. We assume that the malicious adversary can corrupt up to $t < n/3$ parties. Our first protocol has $\frac{n}{2}(\log{n}-1)$ layers, and in every layer, there are $n/2$  {\em swap-gates}. In the presence of $t$ corrupted parties, the adversary can guess the permutation with probability $\frac{1}{(n-t)!}$. When the number of inputs is large (e.g., $n\ge 2^{8}$), MPS becomes inefficient. The cost of shuffling is proportional to the number of {\em swap-gates} in the network. 

In our second protocol, we reduce the number of layers at the cost of the number of permutations. Let the number of inputs be $n=n_1n_2$. We design $n_2$ rearrangeable networks, each with $n_1$ inputs, followed by a {\ttfamily Riffle}\footnote{Riffle is a well-known technique for card shuffling} network. In this design, a malicious adversary who corrupts at most $t$ parties can guess the permutation with probability $((n_1-\theta_1)!)^{n_2}2^{\theta_2})^{-1}$ for some positive $\theta_2$.

We analyze the existing MPS by sorting network \cite{shuffle2}. We find that sorting gates are costlier than the {\em swap-gates} used in the rearrangeable network.

Finally, we show that our proposed MPS protocols are unconditionally secure against $t<n/3$ corrupted parties. The protocols are universally-composable and scalable with the number of inputs.

\section{Preliminaries}
In the rest of this paper, we use the symbols which are summarized in \autoref{notation}. Furthermore, we define {\em round} that comprises all the communications where every party sends one message to all other parties and performs some local computation. We assume that parties are synchronous, that is messages do not have an arbitrary delay.Every party has an identifier (index) which is known to all. We assume an authenticated-private channel between every pair of parties. 

\begin{table}[t]
    \centering
    \begin{tabular}{c|l}
    \hline
    $\mathbb{F}_p$                       & Finite field of order $p$, where $p$ is prime. \\
    \hline
    $\mathcal{P}=\{P_1,\dots,P_n\}$      & A set of $n$ mutually distrustful parties. \\
    \hline
    $x_i \in \mathbb{F}_p$               & $P_i$'s secret. \\
    \hline
    $[x]$                                & Secret sharing of $x\in \mathbb{F}_p$.  \\
    \hline
    $x_{ij} \in [x_i]$                   & The $j^{th}$ share of $P_i$'s secret. $P_i$ privately communicates the share to party $P_j$. \\
    \hline
    \end{tabular}
    \caption{Notations and their definitions.}
    \label{notation}
\end{table}

\subsection{Secret sharing}
Let $\mathcal{P}$ be a finite set of parties. Let $D\not\in \mathcal{P}$ be a distinguished party, called dealer, who possesses a secret $x\in X$. A distribution scheme $\mathcal{S} = \langle \mathcal{S}_S,\mathcal{S}_R \rangle$ is called $t$-$n$ threshold secret sharing if:
\begin{enumerate}
	\item	$\mathcal{S}_S: X \times R \rightarrow \mathbb{F}^n$ be the sharing function, where $R$ is a uniform distribution of randomness over some finite field extension $\mathbb{F}^{t-1}$. The dealer samples ${\bf r}\in R$ uniformly at random, and shares the secret $x$ according to $\mathcal{S}_S(x,{\bf r})\rightarrow \langle x_1,\dots,x_n\rangle$, where every $x_i \in \mathbb{F}$. Dealer privately communicates the share $x_i$ to player $P_i$.
	
	\item	A set of the authorized party can reconstruct the secret $x$. The size of the authorized set is parameterized by $t$. Any subset of lesser than $t$ parties is unable to reconstruct the secret. $\mathcal{S}_R: \mathbb{F}^t \rightarrow X$ is the reconstruction function. The probability of correct reconstruction with the authorized set is $Pr[\mathcal{S}_R(\cdot)=x_i]=1$.
\end{enumerate}

\subsection{Shamir's Secret Sharing (SSS)}
Shamir \cite{shamir} introduced a $t$-$n$ threshold secret sharing scheme over the finite field $\mathbb{F}_p$. Let $\mathcal{P}$ be the set of parties, and the dealer possesses a secret $x\in \mathbb{F}_p$. Dealer chooses $a_0,\dots a_{t-1} \in \mathbb{F}_p$ randomly, sets a polynomial $f(z) = x+\sum_{i=1}^{t-1} a_iz^i$, and computes the shares as $\langle f(1),\dots, f(n)\rangle$. Dealer communicates $f(i)$ to party $P_i$ privately. The sharing is represented as $\mathcal{S}_S(x,a_i,\dots,a_{t-1})\rightarrow [x]$, where $x_i\in [x]$ is the $i^{th}$ share and computed as $x_i = f(i)$. 

As $n$ points on a polynomial of degree $t-1$ have been shared among the parties, any $t$ points can redefine the polynomial $f(x)$, thereby $x=f(0)$. The reconstruction function is the interpolation of any $t$ points on the polynomial, and defines as $x = \sum_{i=1}^{t} \lambda_i f(i)$. Here $\lambda$ is the Lagrange Interpolation of $t$ distinct points.

\subsection{Verifiable secret sharing}
In presence of the faulty party, where either the dealer or some party may not behave honestly, SSS fails to meet the correct output. The notion of Verifiable Secret Sharing (VSS) was introduced in \cite{Chor}, where the parties can verify the correctness of their shares that they have received from the dealer. A secret sharing protocol is verifiable if parties can verify the following without learning any additional information about the secret.
\begin{itemize}
    \item The dealer distributes the valid shares.
    \item During reconstruction, reconstructor receives the correct shares from the respective parties.  
\end{itemize}
Ben-Or et al. \cite{Ben-Or} and Chaum et al. \cite{Chaum} proposed the interactive VSS mechanism for unconditionally secure  protocols with at most $t<n/2$ passive corruption and at most $t<n/3$ active corruption. Further, Rabin and Ben-Or \cite{Rabin} showed that in the presence of a broadcast channel, the upper bound of corrupted parties could be $t<n/2$, irrespective of the mode of corruption.

Our construction is based on Ben-Or et al. VSS mechanism \cite{Ben-Or}. We index the parties as $P_1,\dots,P_n$. The $i^{th}$ party receives the share as $f(\omega^i)$, where $\omega \in \mathbb{F}_p$ is the $p^{th}$ root of unity (i.e., $\omega^{p} = 1$), and $n<p$.

\subsection{Secure multiparty computation}
Now we consider $n$ parties $\{P_1,\dots,P_n\}$, each of them possesses a secret $x_i$, wants to compute a publicly known function $F(x_1,\dots,x_n)$ on their private inputs in such a way that no party learns anything about others' input, and the output is either known to all or none.
\subsubsection{Adversary models}
 The adversary is an entity that corrupts some of the parties to learn private information. The adversary may be either {\em semi honest} or {\em malicious}. In a {\em semi-honest} setting, the corrupted parties follow the protocol correctly. However, the adversary obtains all the internal states and the messages that are received by the corrupt parties. Semi-honest adversary is often called  {\em honest-but-curious} or {\em passive adversary}. 
 
 On the other hand, {\em malicious} adversary controls the corrupted parties. The adversary determines the inputs of the corrupted parties. Moreover, the corrupted parties may deviate from the protocol arbitrarily. The only restriction in both the cases is - adversary cannot learns the randomness of the corrupted parties, which implies that parties can toss coins independently, and the outcome of the tosses are private.
 
 An adversary structure is a subset of possible corrupted parties. The adversary is static if the subset of corrupted parties is chosen prior to the start of the protocol. On the other hand, the adversary structure is adaptive when the subset of corrupted parties is dynamic, i.e., the adversary corrupts the parties during the execution of the protocol.   
\par
Depending on the computational capacity, multiparty protocols are classified into two categories. 
\begin{itemize}
    \item {\bf Computationally Bounded or Conditional:}  Adversary runs in polynomial time, and is unable to solve certain hardness of the problem (e.g., factorization, computational DLP problems, etc.)
    \item {\bf Computationally Unbounded or Unconditional:} Adversary may not be limited to polynomial running time. The adversary has unlimited computational power. For example, given two random elements, $x$ and $y$ form a set, the adversary's ability to distinguish the elements is negligible.   
\end{itemize}

SSS is unconditionally secure. Given any $t-1$ shares $\langle x_1,\dots,x_{t-1}\rangle \in \mathbb{F}_p$ and any random polynomial function $G(\cdot)$ to compute the secret with, probability that adversary can compute the secret is $Pr[G(\langle x_1,\dots,x_{t-1}\rangle) = x] = \frac{1}{|\mathbb{F}_p|}$.

\subsubsection{Security Definition}
A multiparty computation $F(x_1,\dots,x_n)=y$ with $n$ parties is a random mapping of $n$-private inputs $(x_1,\dots,x_n)$ to $n$-private outputs $(y_1,\dots,y_n)$, one for each party, such that a reconstruction function $\mathcal{S}_R(y_1,\dots,y_n)$ remaps the output to $y$. We refer such a process as functionality in the $Ideal$ sense. The mapping is defined as $F:(\{0,1\}^*)^n \rightarrow (\{0,1\}^*)^n$ where $F=(f_1,\dots,f_n)$ and every $f_i$ realizes the functionality of party $P_i$. For every input $X=(x_1,\dots,x_n)$, the output is a random mapping $(f_1(x_1),\dots,f_n(x_n))$. Here $x_i$ is the private input of party $P_i$.

Let $\Phi$ be a multiparty protocol that realizes functionality $F$. The $view$ of a party $P_i$ during the execution of $\Phi$ on input $X=\langle x_1, \dots, x_n\rangle$ and security parameter $t$ is denoted as $view_{i}^{\Phi}(X,t)=(x_i, r_i, m^{i}_{1}, \dots, m^{i}_{n})$, where $r_i$ is the local randomness for party $P_i$ and $m^{i}_{j}$ is the message received by party $P_i$ from party $P_j$. The output of party $P_i$ is denoted as $output^{\Phi}_{i}(X,t)$. The joint output of all parties is $output^{\Phi}(X,t)$. All $view$s and their subsequent $output$s are random variables.  

In the MPC, all $output$s are indistinguishable. That is, one cannot distinguish the $output_i$ from the others. We represent the indistinguishably as \begin{equation*}
    output^{\Phi}_{1}(X,t)\overset{d}{\equiv} \dots \overset{d}{\equiv} output^{\Phi}_{n}(X,t) \overset{d}{\equiv}  output^{\Phi}(X,t)
\end{equation*}
\begin{definition}{\bf Security in semi-honest adversary:}
Let $F=(f_1,\dots,f_n)$ be a functionality and $t$ be the security parameter. The protocol $\Phi$ securely computes $F$ in the presence of static semi-honest adversary if there exist Probabilistic Polynomial Time (PPT) simulators $S_1,\dots S_n$ for the $P_1,\dots,P_n$, respectively such that simulators do not learn any additional information than the $view$s of the corresponding parties, and the output of the functionality $F$. We formally denote: 
\begin{equation}\label{UCequ1}
   \{S_i(1^t,x_i,f_i(x_i)), F(X)\} \overset{d}{\equiv} \{view^{\Phi}_{i}(X,t), output^{\Phi}(X,t)\}
\end{equation} 
\end{definition}

Let $F$ be the functionality, and $A$ be a PPT adversary. An $Ideal$ execution of the functionality refers to the process where every party handovers his input to a trusted party, the trusted party computes the functionality $F$, and returns the $output$ to the corresponding party privately. An $Ideal$ execution of party $P_i$ is denoted as $Ideal_{f_i,A(z)}(X,t)$ where $z$ is the input of the adversary.

A $Real$ execution of the protocol $\Phi$ refers to the process where parties execute the protocol $\Phi$ by exchanging messages over private channels \cite{Ben-Or,Chaum}. A $Real$ execution of party $P_i$ is denoted as $Real_{\Phi,A(z)}(X,t)$

\begin{definition}{\bf Security in malicious adversary:}
Let $F$ be the functionality, $t$ be the security parameter, and $\Phi$ be the multiparty protocol. The protocol $\Phi$ is securely evaluating the functionality $F$ in the presence of malicious adversary if, for every $Real$ execution, there exists a PPT simulator $S_A$ corresponds to the adversary such that the $Ideal$ process with the simulator $S_A$ is equivalent to any $Real$ execution with the adversary $A$ with the local randomness $z$. 
\begin{equation}\label{UCequ2}
    \{Ideal_{f_i,S_A}(X,t)\} \overset{d}{\equiv} \{Real_{\Phi,A(z)}(X,t)\}
\end{equation}
\end{definition}

The simulation models in \autoref{UCequ1} and \ref{UCequ2} provide the security definitions in the stand-alone paradigm. 

\subsubsection{Universal Composability of Cryptographic protocol}
Universal composability (UC) is a general framework to describe and analyze the security properties of any cryptographic protocol. Protocols are modeled as a computation to be executed by some computational entities, called parties who communicate among themselves. Parties run the protocol on their local inputs and randomness. There is an additional computational entity, called adversary, who may control a subset of parties their respective communication channels. However, the adversary can not control the local randomness of individual parties. 

Under the composition paradigm, UC defines the adversary as the {\em Environment}, denoted as $Z$. The {\em Environment} generates all the inputs, reads all outputs, and interacts with the real adversary in an arbitrary way.

\begin{definition}{\bf UC-securely computation:}
Let $F$ be the functionality, $t$ be the security parameter, and $\Phi$ be the multiparty protocol. The protocol $\Phi$ is UC-securely computable if, there does not exist any {\em Environment} $Z$ who can distinguish whether the execution is the $Ideal$ functionality $F$ with the simulator $S$ or the $Real$ in the presence of adversary $A$.
\begin{equation}\label{UCequ3}
    Ideal_{F,S_{A}(z)}(X,t) \overset{d}{\equiv} Real_{\Phi,A(z)}(X,t)
\end{equation}
\end{definition}
In the definition, $view$ of the {\em Environment} includes all inputs and outputs of every party, except their randomnesses. The notation $S_{A}(z)$ denotes whatever the input of the simulator $S$ is known to {\em Environment} $Z$ or not.

\begin{definition}{\bf Straight-line Black-box simulator:}
The simulator is black-box if it only allows oracle access to the adversary. Such a simulator is straight-line if it interacts with the adversary in a state-full manner. That is, the simulator sends all the simulated messages of a round to the adversary and then proceeds to the next round.
\end{definition}
\begin{theorem}\label{thr1}{(\bf Kushilevitz et. al \cite{Kushilevitz})}
If a protocol is securely computable in the stand-alone model and has a straight-line black-box simulator, then the protocol is also UC-securely computable.
\end{theorem}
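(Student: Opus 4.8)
The plan is to show that the straight-line black-box simulator promised by stand-alone security can be reused, essentially unchanged, as the UC simulator, once it is wrapped so as to relay the environment's messages to and from the black-box adversary. Fix the real-world adversary $A$ against $\Phi$, and let $S$ be the straight-line black-box simulator guaranteed by \autoref{UCequ2}. Since $S$ is black-box it is an algorithm $S^{(\cdot)}$ that only makes oracle calls to an adversary, and since it is straight-line it processes that adversary round by round: it produces all round-$i$ messages before touching round $i+1$, and never returns the adversary to an earlier state. These two properties are exactly what the argument will exploit.

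First I would construct the UC simulator (the ideal-world adversary) $S'_A$ by having it run $S^{A}$ internally, using the given $A$ as the oracle, and having it relay $A$'s external communication: any message the environment $Z$ sends to $A$ is delivered to the internal copy of $A$, and any message or output that $A$ produces for $Z$ is passed back out to $Z$; meanwhile $S$ interacts with the ideal functionality $F$ on behalf of the honest parties. The straight-line property is what makes this relaying well defined — because $S$ never returns $A$ to a previous state, $S'_A$ never has to retract a message it has already delivered to $Z$, so the whole interaction with $Z$ is a single monotone execution, which is the only kind the UC experiment permits. (A generic rewinding simulator would fail here, since rewinding $A$ would force rewinding $Z$, which is impossible.)

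Next I would prove indistinguishability by reduction to \autoref{UCequ2}. Suppose some environment $Z$ tells $Ideal_{F,S'_A(z)}(X,t)$ and $Real_{\Phi,A(z)}(X,t)$ apart with non-negligible advantage. I would bundle $Z$'s code together with $A$ into a single stand-alone adversary, supplying $X$, $z$ and $Z$'s coins as auxiliary input; the point is that $Z$'s output bit in each UC world can be recovered from the honest-party outputs together with the transcript of the bundled adversary in, respectively, a stand-alone \emph{real} execution and a stand-alone \emph{ideal} execution — the ideal side being realized precisely by $S$ driving the sub-adversary $A$ with $Z$'s messages relayed in, and, crucially, never rewinding $Z$. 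Hence a distinguisher between the two UC worlds yields a distinguisher violating the stand-alone guarantee \autoref{UCequ2}, a contradiction; therefore \autoref{UCequ3} holds and $\Phi$ is UC-securely computable.

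The main obstacle is the one the hypotheses are designed to defuse, namely rewinding: stand-alone simulators are in general permitted to rewind the adversary, but an external environment is stateful and cannot be rewound once it has received a message. I expect the care in a full write-up to concentrate on two points. First, matching the round-by-round scheduling of $S$ to the message-delivery scheduling that the UC model grants to $Z$, so that no timing discrepancy between "when $S$ emits round-$i$ messages" and "when $Z$ expects to see them" gives $Z$ any leverage. Second, justifying the folding of $Z$ into a stand-alone adversary: this needs stand-alone security to be quantified over all auxiliary-input (non-uniform) adversaries and over every input vector $X$, and it needs the honest-party outputs to be routed to $Z$ consistently in both the real and the ideal UC executions. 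With those two bookkeeping points discharged, the reduction above goes through and the theorem follows.
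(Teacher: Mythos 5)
The paper itself gives no proof of \autoref{thr1}: it is imported verbatim from Kushilevitz et al.\ \cite{Kushilevitz} and used as a black box, so there is no internal argument to compare yours against. Judged on its own terms, your sketch reproduces the standard argument from that reference: promote the straight-line black-box simulator $S$ to a UC simulator by running $S^{A}$ and relaying the environment's traffic, observe that straight-line simulation is exactly what avoids the impossibility of rewinding an external $Z$, and reduce any distinguishing environment to a stand-alone distinguisher by folding $Z$ and $A$ into one adversary. That is the right shape of proof, and your two ``bookkeeping'' caveats are the right ones to flag.

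One point deserves more weight than you give it, because it is where the cited theorem actually earns its keep: the folding step is delicate because the stand-alone definition (\autoref{UCequ2}) quantifies over executions with inputs fixed in advance, while in the UC experiment $Z$ is stateful, may be computationally unbounded, and supplies inputs and auxiliary information adaptively as the execution unfolds. Packaging ``$Z$'s code, $X$, $z$ and $Z$'s coins'' as non-uniform advice only works because, in the information-theoretic setting of this paper and of \cite{Kushilevitz}, the stand-alone guarantee is statistical and holds for every input vector and every (unbounded) adversary, so it in particular holds for each fixed choice of $Z$'s randomness and adaptively determined inputs; one then averages over $Z$'s coins. Without that information-theoretic, for-all-inputs quantification the implication is false in general in the computational setting, so a complete write-up must make explicit that the hypothesis being used is statistical stand-alone security with a straight-line black-box simulator, not merely stand-alone security per se. With that made explicit, your reduction goes through and matches the argument of the cited source.
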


\subsubsection{UC Hybrid model}
Now, consider a protocol $\Phi$ that has $\eta$ sub-protocol invocations where each of the sub-protocol is already proven to be UC-securely computable. The {\em modular composition} theorem \cite{Canetti00} allows to analyze the UC-security of the protocol $\Phi$ from the composability of the sub-protocols. Let $\{\Phi_1,\dots,\Phi_{\eta}\}$ be the sub-protocols and $\{F_i,\dots,F_{\eta}\}$ be the functionalities of the sub-protocols, respectively. The $(\Phi_1,\dots,\Phi_{\eta})$ hybrid model is defined as below:

\begin{definition}{\bf Hybrid model:}
Let $F=\{F_1,\dots,F_{\eta}\}$ be the functionality corresponds to the protocol $\Phi$ having invocation to $\{\Phi_1,\dots,\Phi_{\eta}\}$ sub-protocols, $t$ be the security parameter. Protocol $\Phi$ securely evaluates $F$ if for every $Real$ execution, there exist a PPT simulator $S_A$ corresponds to the adversary such that no {\em Environment} $Z$ can distinguish whether the execution is the $Ideal$ process with the simulator $S_A$ or the $Real$ execution of the adversary $A$. Here $Z$ learns the inputs of all parties and the outputs of the corrupted parties and interacts with the adversary arbitrarily. We formally define: 
\begin{equation}\label{UChybrid}
    Ideal_{\Phi,S_{A}(z)}^{F_1,\dots,F_{\eta}}(X,t) \overset{d}{\equiv} Real_{\Phi,A(z)}(X,t)
\end{equation}
\end{definition}

\subsection{Multiparty computation on SSS} \label{MPCandSSS}
Let $\mathcal{P}$ be the set of $n$ parties, and $[x], [y]$ be two shared secrets over the field $\mathbb{F}_p$: 
\begin{itemize}
    \item {\bf Addition:} Parties can compute $[x+y]\leftarrow Add([x],[y])$ locally.
    \item {\bf Multiplication:} Parties can compute $[xy] \leftarrow Mul([x],[y])$ with one round of communication \footnote{Multiplication operation requires at least $2t-1$ parties.}.
     \item {\bf Constant Multiplication:} For a publicly known constant $c \in \mathbb{F}_p$, parties can compute $[cy]\leftarrow Mul(c,[y])$ locally.
\end{itemize}
Damg{\aa}rd et al. \cite{Damgard} further enhanced the multiparty functionalities as below:
\begin{itemize}
    \item {\bf Random Number Generation:} Parties generate a random share as $[r]\leftarrow Rand()$ with one round of communication. 
    \item {\bf Random Bit Generation:} Parties generate a random shared bit as $[b]\leftarrow Rand_2()$ where $b\in \{[0],[1]\}\subset \mathbb{F}_p$ with two rounds of communication.
    \item {\bf Inverse:} Let $[x]$ be a shared secret. Parties compute the inverse $[x^{-1}] \leftarrow Inv([x])$ with two rounds of communication.
    
     \begin{table}[!bht]
        \centering
        \begin{tabular}{|l|c|c|}
            \hline
            $BITS$ Protocol                     & Rounds        & Invocations of $Mul$ \\
            \hline \hline
             Damg{\aa}rd et al. \cite{Damgard}  & $38$          & $O(l)$  \\
             \hline
             Nishide et al. \cite{Comp2}             & $25$          & $O(l)$  \\
             \hline
            \multirow{2}{*}{Veugen \cite{linearBit}}      & $l+8$        & $O(l)$  \\
             \cline{2-3}
                                                & $7$\textdagger & $O(l)$  \\
            \hline 
        \end{tabular}
        \caption{The round complexity of $BITS$ operation. \textdagger Veugen \cite{linearBit} proposed an efficient $BITS$ operation with pre-computed randomness to reduce the round complexity.}
        \label{BITS}
    \end{table}
    
    \item {\bf Bit Decomposition:} Bit decomposition function is a random mapping  $BITS:\mathbb{F}_p \rightarrow (\mathbb{F}_p)^l$ where $l = \log{p}$. Let $[x]$ be a shared secret, then $([b_{l-1}],\dots,[b_0]) \leftarrow BITS([x])$ such that $x=\sum_{i=0}^{l-1}2^ib_i$, and $b_i\in\{0,1\}$. $BITS$ is the primitive functionality that is used to map any arithmetic circuit to Boolean circuity. Bit decomposition is a constant round operation. However, it invokes $O(l)$ $Mul$ operations. \autoref{BITS} presents the round complexity of different bit decomposition protocols.
    \item {\bf Random-swap:} Let $([x],[y])$ be an ordered pair of two shared secrets. The {\ttfamily Random-swap} function swaps the pair with probability $1/2$. The parties toss a secret coin, and depending on the output the elements are swapped. Algorithm \ref{randomswap} describes the {\ttfamily Random-swap} operation.
    \begin{algorithm}
	\caption{$Random\text{-}swap(\langle [x],[y]\rangle)$}
	\label{randomswap}
	    $Rand_2() \rightarrow [b]$\;
		$Add([x],[y])\rightarrow [z]$\;
		$Add((Mul([b],[y]), Mul(Add(1-[b]),[x]))) \rightarrow [\alpha]$  \tcc*{One round}
		$Add([z],[-\alpha])\rightarrow [\beta]$ \;
		returns $([\alpha],[\beta])$ \;
\end{algorithm}
The $Muls$ in \autoref{randomswap} are performed in parallel which incurs one round operation. Therefore, the complexity of {\ttfamily Random-swap} is three rounds of communication - two for $Rand_2()$, and one for $Mul$.

    \item {\bf Comparison:} Parties compare two shared secrets as $[b]\leftarrow Com([x],[y])$ (if $x>y$ then $b=1$ otherwise $b=0$). Comparison invokes $BITS$ as a sub-protocol followed by a Boolean circuit of $O(l)$ depth. Therefore, $Com$ is constant round, but $O(l)$ rounds of $Mul$ operation.
    
    \item {\bf Reshare:} Let $\mathcal{P}_1=\{P_1,\dots,P_{n1}\}$ and $\mathcal{P}_2 = \{\bar{P}_1,\dots,\bar{P}_{n2}\}$ be two different sets of parties. Let $[x]$ has been shared among the parties of $\mathcal{P}_1$. Resharing refers to the functionality where parties in $\mathcal{P}_1$ construct another distribution $[\bar{x}]$ and privately communicates to the parties of $\mathcal{P}_2$ such that the reconstruction of $[x]$ and $[\bar{x}]$ by $\mathcal{P}_1$ and $\mathcal{P}_2$ respectively, are equal.  
    \begin{algorithm}
	\caption{$Reshare([x],\mathcal{P}_1,\mathcal{P}_2)$}
	\label{reshare}
	    \tcc{Let $x_i \in [x]$ be the share of $x$ possessed by $P_i \in \mathcal{P}_1$ }
		$\forall{i}$, $P_i \in \mathcal{P}_1$ chooses a random polynomial $f_i(z)$ of degree $t-1$, such that $f_i(0) = x_i$, and invokes VSS-shares with respect to $\mathcal{P}_2$;  \;
		\tcc{let $P_i \in \mathcal{P}_2$ receives the shares from $P_{i_1},P_{i_2},\dots \in \mathcal{P}_2$}
		On receiving the shares from $\mathcal{P}_1$, each party $\bar{P}_j \in \mathcal{P}_2$ computes $\bar{x}_j = \sum_{i=1}^{n} \lambda_i f_i(x_j)$ 
	\end{algorithm}
\end{itemize}

\subsection{Shuffling by Sorting}
In this section we briefly discuss MPS based on sorting network proposed in \cite{CardShuffle}. Let $I=(x_1,\dots,x_n)$ be the secret inputs of $n$ parties. For every input $x_i \in I$, parties generates a random element $r_i \leftarrow Rand()$ and form the tuple $(r_i,x_i)$. Parties then jointly execute a data-oblivious sorting network on the randomness $r_1,\dots,r_n$, which intuitively shuffles the sequence.

\subsubsection{Assumptions and Limitations}
Sorting network is comprised of {\ttfamily compare} and {\ttfamily swap} gates. Multiparty {\ttfamily comparison} is costly, it invokes $BITS$ followed by a Boolean circuit of logarithmic depth. Therefore, every {\ttfamily compare} gate incurs $O(\log{p})$ round of complexity where $\mathbb{F}_p$ be the underlying field. As data-oblivious sorting network typically contains $O(n(\log{n})^2)$ {\ttfamily compare} gates, the overall complexity of a shuffle network is $O(n(\log{n})^3)$.

In contrast, a permutation network is comprised of {\ttfamily Random-swap} and does not require any {\ttfamily comparison}. The {\ttfamily Random-swap} gate is computationally efficient that {\ttfamily comparison}. In our construction, we define the {\ttfamily Random-swap} gate with three rounds of complexity. A permutation network typically contains of $O(n\log{n})$ {\ttfamily Random-swap} gates. Therefore, the complexity of permutation is $O(n\log{n})$.

\begin{quote}
    {\em For any $16$ ($=\frac{1}{0.0625}$) independent runs of the shuffling algorithm \cite{shuffle2} with $32$ inputs, there is an overwhelming chance that one (or more) run would have at least one repetition in the randomness.}
\end{quote}
\begin{table}
    \centering
    \begin{tabular}{|c|c|c|c|c|}
    \hline
    Number of Elements ($n$) & Size of $n$ (bits) & $q=\frac{3}{2}n^2\log{n}$ & Size of $q$ (bits) & $P(n,q)$\\ 
    \hline
        $32$ &  $5$ & $7680$    & $13$ & $0.0625$  \\
        $64$ &  $6$ & $36864$   & $16$ & $0.0548$  \\
        $128$ & $7$ & $172032$  & $18$ & $0.0461$  \\
        $256$ & $8$ & $786432$  & $20$ & $0.0406$  \\
    \hline
    \end{tabular}
    \caption{Birthday Attack in MPS by sorting protocol \cite{shuffle2}}
    \label{bday:table}
\end{table}

The protocol \cite{shuffle2} operates on {\ttfamily Compare-swap} gates. In the design of the protocol, comparisons are performed over the field $\mathbb{F}_q$, whereas swapping are performed over another field $\mathbb{F}_p$, where $q<p$. Therefore, the output of every comparison i.e. $[b]\in \mathbb{F}_q$ has to be mapped to an equivalent $[b]\in \mathbb{F}_p$, where $b\in \{0,1\}$. This share conversion from one domain to another incurs additional rounds of communication. As the sorting network is comprised of $O(n(\log{n}^2)$ {\ttfamily Compare-swap} gates, and for every gate invokes a share conversion, the overall round complexity of \cite{shuffle2} is higher.

\subsection{Byzantine agreement and quorum in a large network} \label{ba}
In a large network with many parties, the computation often becomes inefficient due to a large number of inter-party message passing. We  often form quorums and distribute the computation among the quorums. Forming quorums with the faulty party is not trivial. We refer to the problem of Byzantine Agreement is presence of malicious adversary. Malicious party can view all the messages in a round before sending its own message of that round, and is state-full in the sense that party can remember all previous rounds. In a nutshell, Byzantine Agreement is a protocol that allows the honest parties to agree on a common binary string \cite{Ben-Or-BA}.  Byzantine Agreement protocol is used to form quorums. King et al. \cite{King} and Dani et al. \cite{Dani} protocols are used to generate $n$ number of {\em good} quorums among $n$ parties. 
\begin{definition}{\bf Good Quorum \cite{King,Dani}:}
A $n$ party protocol, called {\ttfamily Quorum-Gen}, with at most $t<n/8$ malicious parties forms $n$ quorums each of them having $O(\log{n})$ parties. The quorums are called {\em good} if no more than $(t/n+\delta)$ parties in each quorum are faulty, where $\delta$ is small. Moreover, the quorums are load-balanced in the sense that no party is mapped to more than $O(\log{n})$ quorums.
\end{definition}

\subsection{Permutation Network}
A permutation network (or rearrangeable network) is a non-blocking network of switches that permutes $n$ inputs to all possible $n!$ ways. The building block of a permutation network is {\ttfamily Random-swap} gates. Every {\ttfamily Random-swap} gate has two inputs and two outputs. The gate randomly maps the input lines to the output lines, one to each output. Let $\mathcal{S}$ be a permutation network with $n$ inputs, and $\pi$ be a random permutation, then there exists some configuration $\mathcal{C}$ such that the network outputs the permutation $\pi$. If $(i,j)^{th}$ entry of $\mathcal{C}$ is $1$, then the corresponding gate swaps the input, otherwise passes the inputs.

\begin{definition}{$n$-permute:}
A $n$-permute network with $n$-inputs is capable of permuting the input sequence in all $n!$ possible ways.
\end{definition}
\begin{table}
    \centering
    \begin{tabular}{|c|c|}
    \hline
    No. of Permutations & No. of Occurrences\\
    \hline
     8192           & 8 \\
     14336          & 16 \\
     12288          & 32 \\
     2048           & 40 \\
     2816           & 64 \\
     512            & 128 \\
     128            & 256\\
     \hline
    \end{tabular}
    \caption{the occurrences of different permutations in a $8$-input Bene\v{s} network. One can read first row as - $8192$ different permutations each occurs $8$ times.}
    \label{distribution}
\end{table}

\subsubsection{Bene\v{s} Network}
Bene\v{s} network is a $n$-permute network \cite{Benes}. The dimension $d$ characterizes the Bene\v{s} network. A $d$-dimension Bene\v{s} network has $n = 2^d$ input and output terminals. The network consists of $K=\frac{n}{2}(2d-1)$ gates arranged in $2d-1$ layers. Every gate in the network behaves as a {\ttfamily Random-swap} gate. The gates are assigned with a random bit $b\in\{0,1\}$. The gate swaps the inputs when the random bit is $1$; otherwise, it passed the inputs. It is easy to observe that, every gate routes its outputs to the alternate halves of the network. For example, the first layer virtually divides the network into two horizontal halves and routes accordingly. The partitioning and routing are applied recursively on each half. Let $X=(x_1,\dots,x_n)$ be the input sequence. After $d$ layers of routing, an input element $x_i\in X$ is permuted to any position, provided that the adjacent element $x_{i+1}$ (in the input sequence) is always in the other half of the permuted sequence. To overrule this constraint, the network performs another $d-1$ layers of routing. The configuration of the network is captured in a matrix called a configuration matrix. \autoref{benes1} shows a Bene\v{s} network with $8$ input. The corresponding configuration of the network is shown in \autoref{configuration}. It is easy to observe that for any permutation $\pi$ there exist multiple configurations. For example, \autoref{configuration} presents two configurations, $\mathcal{C}$ and $\bar{\mathcal{C}}$, correspond to the $8$ input Bene\v{s} network shown in \autoref{benes1}.
\begin{equation} \label{configuration}
             \begin{array}{ccccccccccccc}
                              & 1 & 1 & 1 & 0 & 0 & \quad &                                 & 1 & 1 & 1 & 0 & 0 \\
\mathcal{C}_{\mathcal{S}} =   & 0 & 1 & 0 & 1 & 1 & \quad &\bar{\mathcal{C}}_{\mathcal{S}}= & 0 & 1 & 0 & 1 & 0 \\
                              & 1 & 0 & 1 & 1 & 0 & \quad &                                 & 1 & 0 & 0 & 0 & 0 \\
                              & 1 & 1 & 0 & 0 & 0 & \quad &                                 & 1 & 1 & 1 & 1 & 1 \\
        \end{array}
\end{equation}

Every gate in Bene\v{s} network takes two inputs and swaps the inputs with probability $1/2$. This implies that the network is able to produce $2^K$ possible outputs. Whereas, the total number of permutations with $n$ inputs is $n! <2^K$, for $n>2$. Therefore, the mapping from the set of all possible configurations $\mathbb{C}=\{\mathcal{C}_1,\dots,\mathcal{C}_{2^K}\}$ to the set of all possible permutation $\Pi=\{\pi_1,\dots,\pi_{n!}\}$ is many to one. We find that the distribution of occurring the permutations are not uniform. \autoref{distribution} shows the distribution of occurring of different permutations of an $8$-input Bene\v{s} network. The first column of \autoref{distribution} represents the number of permutations that occur by the number depicted in the second column.

\subsubsection{Arbitrary Size Bene\v{s} Network} 
Bene\v{s} network has a limitation. The number of inputs is always a power of $2$. An arbitrary size Bene\v{s} network was introduced by Chang and Melhel \cite{arbitrary} and further optimized in \cite{optimizesArbitrary}. They proposed a $3$-input swap gate as shown in \autoref{3beans} (Left). The arbitrary size $n$-permute network is recursively constructed by forming a $\lceil \frac{n}{2}\rceil$-permute and another $\lfloor \frac{n}{2}\rfloor$-permute networks. If $n$ is even, then the two networks are of equal size, otherwise one of the network contains odd number of inputs. An odd input permute network includes one $3$-input swap gate. \autoref{3beans} (Right) shows the construction of a $(n+1)$-permute network, where $n$ is even.

\begin{figure}
\centering
\resizebox{3 in}{2in}{\input{switch_basic.pspdftex}}
\caption{An $8$-input {\ttfamily Bene\v{s}} network and its routing corresponds to the configuration $\mathcal{C}_S$ in  \autoref{configuration}} 
\label{benes1}
\end{figure}

\begin{figure}
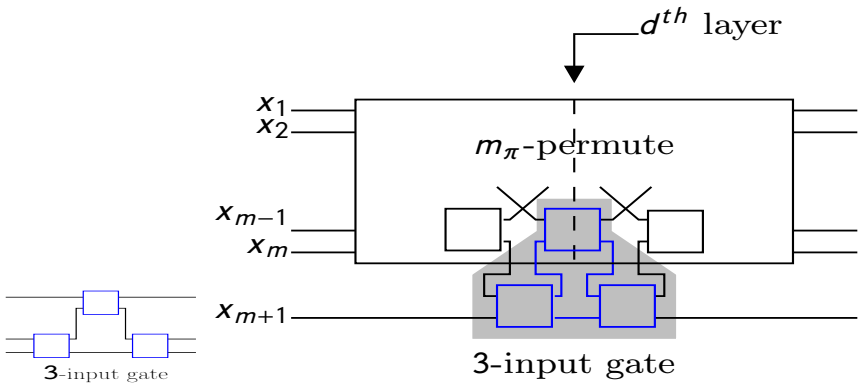

\centering
\resizebox{1 in}{.5in}{\input{3Beans.pspdftex}} \qquad \qquad
\resizebox{3 in}{2 in}{\input{arbitaryBenesBC.pspdftex}}
\caption{Left: The construction of a $3$-input swap gate with three $2$-input swap gates in a cycle. Right: The construction of a $(n+1)$-permute network with one $3$-input swap gate.} 
\label{3beans}
\end{figure}

\subsection{$n_{\pi}$-permute Network}
There are two reasons that motivate us to define $n_{\pi}$-permute network. Firstly, The structure of an arbitrary size Bene\v{s} network is asymmetric. A $3$-permute swap gate is constructed with three $2$-permute swap gates in a cycle. The design of an asymmetric network using crossbar-switches is inefficient. Therefore, we propose a $n_{\pi}$-permute network which is almost symmetric in structure. Secondly, the cost of a permutation network mainly depends on the size of the network, which is determined by the number of gates in the network. If the number of inputs is $n$, then there are at most $\frac{n}{2}(2\log n-1)$ gates in the network. When $n$ is large (say $64\ge n$), then the cost of permutation is relatively high. In our second construction, we propose another $n_{\pi}$-permute network that reduces the cost of the permutation. 

The proposed $n_{\pi}$-permute network is a blocking network. That means, there are some permutations for which no configuration is defined in the network. Our design goal is not to design all possible permutations, but to maximize the number of permutations. Furthermore, $n_{\pi}$-permute network is primarily used for multiparty shuffling and may not be prescribed for permuting the inputs.

\begin{definition}{$n_{\pi}$-permute:}
Let $\bar{\Pi} = \{\pi_1,\dots,\pi_k\}$ be a subset of all permutations of $n$ inputs. A $n_{\pi}$-permute network produces the permutation $\pi_i \in \bar{\Pi}$. The number of distinct permutations generated by $n_{\pi}$-permute network is upper bounded by $k$.
\end{definition}

We present two constructions of $n_{\pi}$-permute network. The first construction is based on arbitrary size Bene\v{s} network \cite{arbitrary}. We call this network as symmetric $n_{\pi}$-permute. The second construction reduces the size of the network. Let $n=n_1n_2$, then we construct $n_2$ number of $n_1$-permute (or $n_{1_{\pi}}$-permute) networks followed by a {\ttfamily Riffle} structure that mixes the individual permutations.

\subsubsection{Symmetric $n_{\pi}$-permute Network}
Symmetric $n_{\pi}$-permute network is of even size (i.e. $n$ is even) and appears to be almost symmetric in structure. We propose a cross-connector that binds two $n$-permute (or $n_{\pi}$-permute) networks and forms a $(2n+2)_{\pi}$-permute network. \autoref{bind} shows the binding technique of two networks. The green lines randomly draw one element from each upper and lowed $n$-permute networks and push the elements to gate $G_2$. On the other hand, gate $G_1$ pushes two random elements $\alpha$ and $\beta$, where $\alpha \in \{x_1,x_2\}$ and $\beta\in \{y_{n-1},y_n\}$, to the upper and lower permute network, respectively. This drawing and pushing operations occur at the middle of the permute network. The following layers mixes the elements in such a way that $\alpha$ and $\beta$ always appear at the upper and lower half of the respective $n$-permute networks. In \autoref{bind}, the blue gates are used to show all the possible mixing scenarios of $\alpha$ and $\beta$.

The proposed symmetric $n_{\pi}$-permute network cannot produce all possible permutations. Consider the \autoref{bind}, it is easy to observe that $\alpha$ and $\beta$ never reach at gate $\bar{G}_1$. Therefore, the final permutation never contains $\alpha$ and $\beta$ in the middle of the permuted string. 
re
We estimate the upper bound of the number of permutations produced by the symmetric $n_{\pi}$-permute network. We consider two equal sized $n$-permute (or $n_{\pi}$-permute) networks which are connected by a cross-connector. The total number of inputs becomes $2n+2$. Let $N = 2n+2$, we find that all but two elements, one from $\{x_1,x_2\}$ and other from $\{y_{n-1},y_n\}$, may permute at the middle of the output string of size $N$. We recursively define the upper bound as $f_{\Pi}(N)$ as
\begin{equation*}\begin{split}
    f_{\Pi}(N) &=  \begin{cases} N! \text{~~~~~~~~~~~~~~~~~~~~~~~~~~~~~~~~~~~~~~~~when $N$ is a power of $2$}\\
                                 N(N-1)\dots(\frac{N}{2}+1)f_{\Pi}(N/2) \text{~~~~when $N=2n$ and $n$ is even}\\
                                 2(N-2)\dots(\frac{N}{2}+1)f_{\Pi}(N/2) \text{~~~~~when $N=2n$ and $n$ is odd}\\
                   \end{cases}
\end{split}\end{equation*}

\begin{figure}
\centering
\resizebox{4.25in}{3in}{\input{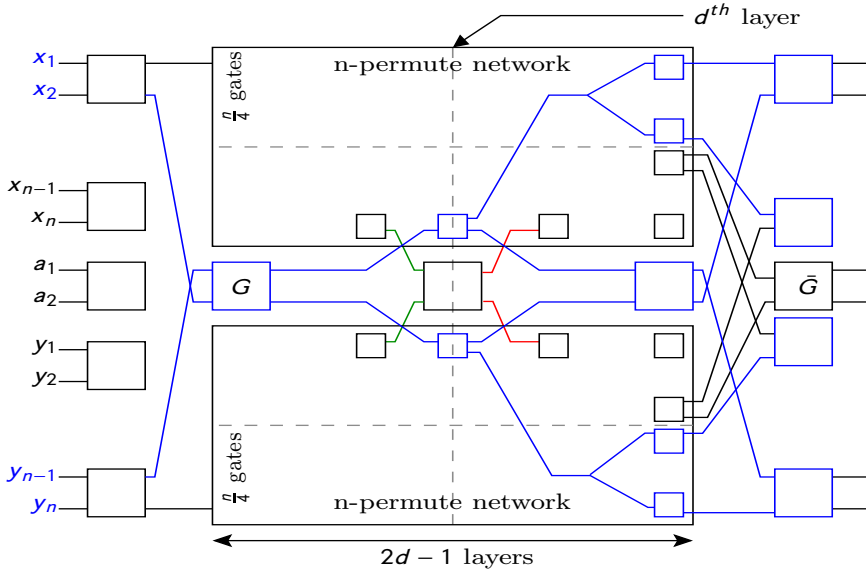}}
\caption{Two $n$-permute networks are bound using a cross-connector.} 
\label{bind}
\end{figure}

\subsubsection{Reduced $n_{\pi}$-permute Network}
In our second construction of $n_{\pi}$-permute network, we represent $n=n_1n_2$. We construct $n_2$ number of $n$-permute (or $n_{1_{\pi}}$-permute) networks and {\em Mix} the outputs of all $n_{1_{\pi}}$-permute networks using a {\ttfamily Riffle} structure. We call this permutation as reduced $n_{\pi}$-permute network.
\begin{algorithm}
	\caption{$BinaryRiffle(i,j,([x_i],\dots,[x_j]), d_2)$}
	\label{binaryriffle}
	$mid \leftarrow \frac{i+j}{2}+1$\;
	\If{ $d_2>1$}{
        $BinaryRiffle(i,mid-1,([x_i],\dots,[x_{mid-1}]),d_2/2)$\;
	    $BinaryRiffle(mid,j,([x_{mid}],\dots,[x_j]),d_2/2)$\;
	     \For{$k=0\text{ to }mid-1$}{
	     	$Random\text{-}swap(([x_{i+k}],[x_{mid+k}]))$\;
	     }
	}
	\Else{ 
		\text{return}\;
	}
    \end{algorithm}

\paragraph{\ttfamily Riffle:} This is one of the common technique for card shuffling. The deck of the card is divided into two halves. The halves are held in each hand and released so that the cards fall almost interleavely. 

We emulate {\ttfamily BinaryRiffle} to realize the {\ttfamily Riffle} operation. {\ttfamily BinaryRiffle} divides the input sequence into two equal halves. A pair is formed by taking one element from each half. Subsequently, the elements are swapped randomly. This operation is recursively applied on each half. Algorithm \ref{binaryriffle} describes the process of {\ttfamily BinaryRiffle}. In general, there would be $\log{n}$ recursions. However, we tailor the {\ttfamily BinaryRiffle} and run for $\log{n_2}$ recursions.

\autoref{riffle} presents a reduced $n_{\pi}$-permute network with $n=265$ inputs. We represent $n_1 = 2^6$ and $n_2=2^{2}$. There are $4$ Bene\v{s} structures, each of having $64$ inputs followed by a {\ttfamily BinaryRiffle} of having two recursions.

\begin{figure}
\centering
\resizebox{3.5in}{4.3in}{\input{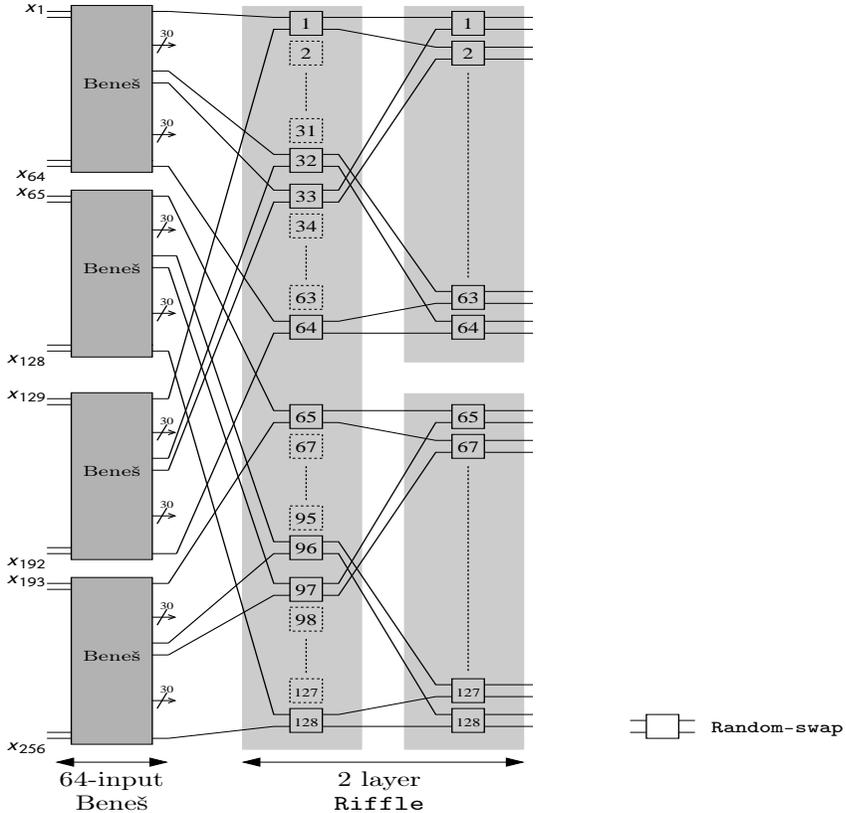}}
\caption{A $256_{\pi}$-permute network. We construct the network with four $64$-permute structures followed by a {\ttfamily BinaryRiffle} of depth $2$.}
\label{riffle}
\end{figure}

\section{Multiparty Shuffling Algorithm}
 Let $\pi:\{1,\dots,n\}\rightarrow \{1,\dots,n\}$ denotes a permutation, and $\Pi$ be the subset of all $n!$ permutations. Let $X=(x_1,\dots,x_n)$ be an ordered sequence. Shuffling is a random bijection that maps the sequence $X$ to another sequence $X_{\pi} = \{x_{\pi(1)},\dots, x_{\pi(n)}\}$ where $\pi\in \Pi$. We denote shuffling as a mapping $\mathcal{S}:\{X\}\overset{\pi}{\rightarrow} \{X\}$. 
 
 In MPS protocol, there are $n$ parties. Each party $P_i$ contains a secret input $x_i$. Parties agree on a random permutation $\pi\in \Pi$ and shuffle the sequence $X=(x_1,\dots,x_n)$ to $X_{\pi}=(x_{\pi(1)},\dots, x_{\pi(n)})$ in such a way that no party learns the permutation $\pi$ and the inputs of other parties.
 
 We present two MPS protocols. The first protocol, {\ttfamily Shuffle-I}, is a $n$-permute (or symmetric $n_{\pi}$-permute) network. The second protocol, {\ttfamily Shuffle-II}, is a reduced $n_{\pi}$-permute network. We apply either Bene\v{s} or arbitrary size Bene\v{s} structure to realize the $n$-permute network.

 \begin{definition} {$\zeta$-bit Unlinkability:} The permutation network ($n$-permute or $n_{\pi}$-permute), in the presence of an adversary, is $\zeta$-bit unlinkable if the probability of an adversary to guess the permutation $\pi$ is less than $\frac{1}{2^{\zeta}}$, where $\zeta$ is the largest positive integer.
 \end{definition}

\begin{algorithm}[t]
\SetKwInOut{Input}{Input}
\SetKwInOut{Output}{Output}
\caption{\ttfamily Shuffle-I}
\label{shuffle-I}
\Input{Sequence of $n$ elements, $X=(x_1,\dots, x_n)$}
\Output{A random permutation, $X_{\pi}$}

\Begin{
{\bf Preprocessing:}\\
Parties agree on a $n$-input or symmetric $n_{\pi}$-permute network. Let the network be $\mathcal{S}$ and public to all;\\

Parties execute {\ttfamily Quorum-Gen}, and form $n$ quorums  $\mathbb{Q}=\{Q_1,\dots, Q_n\}$. Each gate $G_i \in \mathcal{S}$ is assigned some quorum $Q_i$, where $i\equiv j\mod{n}$. Thus $G_1$ is assigned to $Q_1$, $G_2$ is assigned to $Q_2$, etc. As there are more gates than the number of quorums, some quorums are associated with multiple gates. \autoref{benes1} shows the assignment of the quorums of an $8$-permute Bene\v{s} network;\\

Every $Q_{i=1,\dots,n/2}$ receives two shared-secrets: $[x_i]$ from party $P_{2i-1}$ and $[y_i]$ from parties $P_{2i}$, respectively;\\

\underline{\bf Gate Operation:}\\
At layer-1, the quorums invoke 
    $$([\alpha_i],[\beta_i]) \leftarrow Random\text{-}swap([x_i],[y_i]) \text{~~~~for $i=1,\dots,n/2$}$$
    
    Thus, the output-lines of gate $G_i$, (for $i=1,\dots,n/2$),  contains the ordered pair $([\alpha_i],[\beta_i])$;\\
\underline{\bf Resharing:}\\
The output-lines of layer-1 are now forwarded to layer-2. Quorums at layer-1 reshare the outputs to the quorums at layer-2;\\
    
If $Q_i$ forwards the  outputs to $Q_{k}^{1}$ and $Q_{k}^{2}$, respectively, then the parties in $Q_i$ reshares $[\alpha_i]$ to $Q_{k}^{1}$, and $[\beta_i]$ to $Q_{k}^{2}$. For example $Q_1$ reshares $[\alpha_1]$ to $Q_{\frac{n}{2}+1}$ and $[\beta_i]$ to $Q_{\frac{n}{2}+\frac{n}{4}+1}$;\\

\underline{\bf Routing:}\\
Routing is performed layer by layer. In each layer, the quorums operate synchronously and forward the outputs to the next layer. On receiving the inputs from the preceding layer, quorums perform the gate operations and forward to the next layer;
}
\end{algorithm}

A MPS protocol, with adversary who can corrupt at most $t$ parties, should satisfy the following properties:
\begin{itemize}
    \item {$\zeta$-bit unlinkability:} For a security parameter $t$, the protocol must satisfy $\zeta$-bit unlinkability.
    \item {\em Uniformly Knowing:} The corrupted party cannot stop the honest party to learn the output. We call this {\em all-or-nothing} where every party learns the permutation, or no party learns anything.
\end{itemize}

\begin{definition}
   A MPS protocol with $t<n/3$ corrupted parties is $t$-resistant, if the protocol guaranteed  $\zeta$-bit unlinkability, and uniformly knowing properties.
\end{definition}

\subsection{\ttfamily Shuffle-I}
Our first protocol is a $n$-permute (or symmetric $n_{\pi}$-permute) shuffling. General, a rearrangeable non-blocking permutation network can be used for shuffling. In our design, we use either a Bene\v{s} or an arbitrary size Bene\v{s} or a symmetric $n_{\pi}$-permute network. The gates are replaced by {\ttfamily Random-swap} gates. \autoref{shuffle-I} describes the MPS protocol.

\subsubsection{{\ttfamily Shuffle-I} with passive adversary}
Following we present the properties of {\ttfamily Shuffle-I} protocol.\\

{\bf $\zeta$-bit Unlinkability}: {\ttfamily Shuffle-I} generates all $n!$ permutations. We consider a passive adversary who can corrupt $t$ parties and learns their inputs. Therefore, the adversary can link those $t$ elements at the output, where the remaining $n-t$ elements are permuted in $f_{\Pi}(n-t)$ possible ways. Therefore, the probability that the adversary can guess the permutation correctly is $\frac{1}{f_{\Pi}(n-t)}$. Consequently, {\ttfamily Shuffle-I} defines the security parameter $\zeta = \lfloor \log{(f_{\Pi}(n-t))}\rfloor$.\\

{\bf Uniformly Knowing:} {\ttfamily Shuffle-I} basically runs on the $n$-permute or symmetric $n_{\pi}$-permute structure. The permutation $\pi$ progresses as the inputs progress layer by layer. The gates of the structure are assigned to some quorum. In a particular layer, the quorums operate synchronously. Thus, if any quorum fails to deliver to the next layer, then the permutation is aborted. The final output is obtained when the inputs pass through all the layers of the network, otherwise aborts. Therefore, {\ttfamily Shuffle-I} guarantees {\em all-or-nothing}.

\subsection{\ttfamily Shuffle-II}
The second protocol is based on reduced $n_{\pi}$-permute shuffling. The protocol is shown in \autoref{shuffle-II}. When the number of inputs is large (say $2^{8}\ge n$), shuffling often becomes inefficient. In that case, we propose a reduction in the number of layers of the network. Let the number of inputs be $n$. We represent $n$ as a product of two numbers, say $n=n_1n_2$. We form $n_2$ independent {\ttfamily Shuffle-I} structures, each with $n_1$ inputs. Finally, the outputs of all the {\ttfamily Shuffle-I} are mixed using {\ttfamily BinaryRiffle}.

\begin{algorithm}[t]
\caption{\ttfamily Shuffle-II}
\SetKwInOut{Input}{Input}
\SetKwInOut{Output}{Output}
\Input{A sequence of $n$ elements $X=(x_1,\dots, x_n)$}
\Output{ A random permutation $X_{\pi}$}
\Begin{
     {\bf Preprocessing:}\\
    Parties invoke {\ttfamily Quorum-Gen}, and form $n$ quorums  $\mathbb{Q}=\{Q_1,\dots, Q_n\}$;\\
    Parties agree on $n=n_1n_2$ and defines $n_2$ {\ttfamily Shuffle-I} networks, each of having $n_1$ inputs. Parties further agree on a {\ttfamily BinaryRiffle} of $\log{n_2}$ layers. Let the network be $\mathcal{S}^{\Pi}$, and is public;\\
    Every gate $G \in \mathcal{S}^{\pi}$ is assigned to some quorum $Q_i$;\\
    {\bf Shuffling in Two Phases:}\\
    \tcc{Shuffling is performed in two phases} 
    \underline{Phase-I}\\
    For every subsequence of $n_1$, say $(x_1,\dots,x_{n_1})$ $(x_{n_1+1}, \dots,x_{2n_1})$ $\dots$, $(x_{(n_2-1)n_1}\dots n_{n_1n_2})$, parties invoke {\ttfamily Shuffle-I} independently on each subsequence. Thus Phase-I produces $n_2$ independent shuffles each with $n_1$ distinct elements;\\
    \underline{Phase-II:}\\
    Phase-II is a {\ttfamily BinaryRiffle} with $\log{n_2}$ layers. The outputs of {\ttfamily Shuffle-I} are mixed using the {\ttfamily Riffle};\\
    {\bf Routing:}\\
    Routing is performed layer by layer. In each layer, the quorums operate synchronously, and forward the output to the next layer. On receiving the inputs from the preceding layer, quorums performs the gate operation;
}
\label{shuffle-II}
\end{algorithm}

\subsubsection{{\ttfamily Shuffle-II} with the passive adversary}
Following, we present the properties of {\ttfamily shuffle-II} protocol.\\

{\bf $\zeta$-bit Unlinkability}: A {\ttfamily Shuffle-II} structure is consisting of two blocks. The first block is {\ttfamily Shuffle-I} structure and the second block is the {\ttfamily BinaryRiffle}. Let there are $n_2$ {\ttfamily Shuffle-I} structures, each of them having $n_1$ inputs. We also assume that adversary knows $t$ inputs, and those  $t$ elements are uniformly distributed over the input sequence. Without loss of generality, we consider that every sub-sequence of $n_1$ elements at input there are $t/n_2$ elements whose values are known to the adversary. Therefore, in every {\ttfamily Shuffle-I} there are $t/n_2$ elements at the output which are linkable, and the remaining $(n_1-t/n_2)$ elements are permuted in $f_{\Pi}(n_1-t/n_2)$ possible ways. As there are $n_2$ {\ttfamily Shuffle-I} structures and all the {\ttfamily Shuffle-I} structure operates concurrently on different inputs set, the unknown elements are permuted in $(f_{\Pi}(n_1-t/n_2)^{n_2}$ possible ways.

The second block is the {\ttfamily BinaryRiffle}. A $n$-input {\ttfamily Riffle} is constructed with $n/2$ {\ttfamily Random-swap} gates and is capable to produce $2^{\frac{n}{2}}$ permutations. In our construction the output of the first block is fed to the {\ttfamily BinaryRiffle} structure. \autoref{benes1} shows a reduced $256_{\pi}$-permute network with $4$ numbers of $64$-permute networks in the first block and a {\ttfamily BinaryRiffle} of depth $2$ at the second block.

Now the input to a {\ttfamily BinaryRiffle} is a sequence of $n$ elements where $t$ elements are linkable. Let the structure has $d_2=\log{n_2}$ layers. All the layers, except the last, independently permutes the sequence in $2^{\frac{n}{2}}$ possible ways. As $t$ elements are linkable, the last layer permutes the sequence in $2^{\frac{n-t}{2}}$ possible ways. Therefore, a {\ttfamily BinaryRiffle} with $d_2$ layers permutes the input in $2^{(d_2n-t)/2}$ possible ways.

Finally, a reduced $n_{\pi}$-permute with $t$ linkable inputs, is capable of permuting the input sequence in $(f_{\Pi}(n_1-t/n_2))^{n_2}2^{(d_2n-t)/2}$ possible ways. Thus the probability of an adversary to guess the permutation is $ \frac{1}{(f_{\Pi}(n_1-t/n_2)^{n_2}2^{(d_2n-t)/2}}$. Consequently, {\ttfamily Shuffle-II} defines the security parameter $\zeta = \lfloor \log{(f_{Pi}(n_1-t/n_2)^{n_2}2^{(d_2n-t)/2}}\rfloor$.

\begin{table}
    \centering
    \begin{tabular}{|c|c|c|}
    \hline
        (number of inputs, number of corruptions)   & {\ttfamily Shuffle-I}   &   {\ttfamily Shuffle-II}   \\
    \hline \hline
                                                    & $\zeta$       &   $\zeta$ (when $n_1=64$, $n_2=2$)   \\
    \hline
                    $n=128$, $t=42$      & $433$         &   $413$      \\
    \hline \hline
                                          & $\zeta$       &   $\zeta$ (when $n_1=64$, $n_2=4$)   \\
    \hline
                   $n=256$, $t=85$        & $1026$         &   $908$               \\
    \hline
                                          &               &   $\zeta$ (when $n_1=128$, $n_2=2$)  \\
    \cline{2-3}
                                          &               &   $944$               \\
    \hline \hline
                                          & $\zeta$       &   $\zeta$ (when $n_1=64$, $n_2=8$)   \\
    \hline
        $n=512$, $t=170$                & $2319$        &   $2074$              \\
    \cline{2-3}
                        &               &   $\zeta$ (when $n_1=128$, $n_2=4$)  \\
    \cline{2-3}
                        &               &   $2146$              \\
    \cline{2-3}
                        &               &   $\zeta$ (when $n_1=256$, $n_2=2$)  \\
    \cline{2-3}
                        &               &   $2224$              \\ 
    \hline
    \end{tabular}
    \caption{Caption}
    \label{zeta}
\end{table}

\autoref{zeta} shows the estimation of $\zeta$ for both {\ttfamily Shuffle-I} and {\ttfamily Shuffle-II}. We consider that $\lfloor n/3 \rfloor$ players are corrupted. We measure the $\zeta$-bit unlinkability for $n=128,256$, and $512$. In each of the cases, we design the reduced $n_{\pi}$-permute with $n_2=2,4$, and $8$ input Bene\v{s} networks where each Bene\v{s} network consists of $n_1=64,128,255$ inputs, respectively. It is easy to observe that $\zeta$-bit unlinkability property of {\ttfamily Shuffle-I} does not vary significantly from {\ttfamily Shuffle-II}.                                                                                                                                                                                                                                         

{\bf Uniformly Knowing:} This property of {\ttfamily Shuffle-II} directly follows from the protocol {\ttfamily Shuffle-I}. 



\section{UC-securely Computability}
In this section we present the UC-security proofs of the two protocols: {\ttfamily Shuffle-I} and {\ttfamily Shuffle-II}. Before proceed, we consider the following:
\begin{enumerate}
    \item The adversary structure is static. However, adversary may be semi-honest or malicious.
    \item The adversary is computationally unbounded. However, the simulator that corresponds to the adversary is probabilistic polynomially bounded.
    \item {\em Environment} $Z$ acts as an interactive Turing machine. The interactive Turing machine has additional tapes which receive inputs during the execution of the Turing machine. This model basically simulates the real-life operation of the protocol, where the inputs are not known during the instantiation of the simulation. The interactive Turing machine models the {\em straight-line black-box} behavior of the simulator (see \autoref{thr1}). 
    \item Let $\mathcal{P}_C \subset \mathcal{P}$ be a static subset of corrupted parties. Adversary is $\mathcal{P}_C$-limited if he can only corrupt the parties from the set $\mathcal{P}_C$. As the adversary structure is static, $\mathcal{P}_C$ is fixed and defined before the execution of the protocol. We set the cardinality of $\mathcal{P}_C$ equal to the security parameter of the protocol, i.e. $|\mathcal{P}_C| = t$.
    \item Finally, we assume that {\ttfamily Quorum-Gen} has already formed $n$ quorums. For every gate $G_j$, let $Q_i$ be the associated quorum, where $i\equiv j\mod{n}$. We also assume that $Q_{C_i} \subset Q_i$ be the set of corrupted parties in quorum $Q_i$.  
\end{enumerate}

\subsection{UC modeling of {\ttfamily Shuffle-I} and {\ttfamily Shuffle-II}}
We model the protocols using  {\em modular composability} model and prove the UC-security of the protocols. Here we define the $Ideal$ functionality correspond to the sub-protocols used in {\ttfamily Shuffle-I} and {\ttfamily Shuffle-II}. \autoref{f} presents the list of sub-protocols, their dependencies, and the $Ideal$ functionalities.

\begin{table}
    \centering
    \parbox{.4\textwidth}{
    \begin{tabular}{|l|l|}
    \hline
        Protocol                        &   Ideal Functionality \\
        \hline
        {\ttfamily VSS-share} \cite{Ben-Or}         & $\quad F_{VSS\text{-}share}$ \\
        {\ttfamily VSS-recons}\cite{Ben-Or}         & $\quad F_{VSS\text{-}recons}$\\
        {\ttfamily Reshare}   \cite{Movahedi}       & $\quad F_{Reshare}$\\
        {\ttfamily Mul}       \cite{Ben-Or,Damgard} & $\quad F_{Mul}$\\
        $Rand_2$              \cite{Damgard}        & $\quad F_{Rand_2}$\\
        {\ttfamily Random-swap}               & $\quad F_{RS}$ \\
        \hline
    \end{tabular}
    }
    \quad
    \begin{minipage}[c]{0.4\textwidth}
    \centering
     {\resizebox{2.25 in}{1.6 in}{\input dependency.pspdftex}}
    \end{minipage}
    \caption{The sub-protocols and their $Ideal$ functionalities. The dependencies of the sub-protocols are depicted  depicts the invocation of sub-protocols.}
   \label{f}
    \end{table}
 
The security proofs of {\ttfamily VSS-share}, {\ttfamily VSS-recons}, {\ttfamily Mul}, $Rand_2$, and {\ttfamily Reshare} protocols (using straight-line black-box simulator) were presented in \cite{Asharov,Damgard,Movahedi}. We apply UC-hybrid model to prove the security of {\ttfamily Random-swap} and {\ttfamily BinaryRiffle}. Subsequently, we show that {\ttfamily Shuffle-I} and {\ttfamily Shuffle-II} are also UC-secure.

\begin{algorithm}[!hbt]
\caption{Functionality $F_{RS}$}
\Begin{
\tcc{Let $Q_i$ be the quorum associated with the gate $G_j$.} 
 Trusted party receives input $(a_1,b_1),\dots(a_n,b_n)\in (\mathbb{F}_p)^2$ from $P_1,\dots,P_n\in Q_i$ respectively;\\
 \If{$P_i$ does not send the input} {set $(a_i,b_i) \leftarrow (0,0)$;}
 Trusted party tosses an unbiased coin and generates the randomness. Let $r\in\{0,1\}$ be the tossed value;\\
 Trusted party computes 
     $$a \leftarrow F_{VSS\text{-}recon}(a_1,\dots,a_n) \text{ and } b \leftarrow F_{VSS\text{-}recon}(b_1,\dots,b_n)$$\\
 \If{$(r==1)$} {Trusted party invokes $(F_{VSS\text{-}share}(b),~ F_{VSS\text{-}share}(a))$;}
 \Else{Trusted party invokes $(F_{VSS\text{-}share}(a),~ F_{VSS\text{-}share}(b))$;}
 
  }
  \label{F-randomswap}
 \end{algorithm}

\subsubsection{Protocol {\ttfamily Random-swap}:} The $Ideal$ functionality of {\ttfamily Random-swap} is given in \autoref{F-randomswap}. The $Ideal$ function receives inputs from the parties, tosses a coin, swaps the inputs based on the toss, and invokes {\ttfamily VSS-shares}. Since {\ttfamily VSS-share} generates uniform and random shares of the input, the $view$s of the parties are independent and random.

 
 \begin{algorithm}
\caption{{\em Environment} $Z$, with Adversary $A$ and Simulator $S_{RS}$}
\label{S-randomswap}
\Begin{
\tcc{Let $Q_i$ be the quorum associated with the gate $G_j$.} 
\tcc{$\mathcal{P}_{C_i} \subset Q_i$ be the set of corrupted parties.}
\tcc{$I_i = Q_i-\mathcal{P}_{C_i}$ be the set of honest parties.}

{\bf \underline{Simulation}}\\
 \For{every party $P_i\in I_i$}{ $S_{RS}$ selects random  $(x_i,y_i)$;\\
    $S_{RS}$ writes $(P_i,(x_i,y_i))$ on the tape of $Z$;
    }
\For{every party $P_i \in \mathcal{P}_{C_i}$} {$S_{RS}$ obtains the input $(x_i,y_i)$ of party $P_i$, and passes the input to $A$;\\
     \If{$A$ corrupts $P_i$}{$S_{RS}$ receives $(\bar{x}_i,\bar{y}_i)$ from $A$; \\
      $S_{RS}$ writes $(P_i,(\bar{x}_i,\bar{y}_i))$ on the tape of $Z$;}
      \Else{$S_{RS}$ writes $(P_i,(x_i,y_i))$ on the tape of $Z$;}
}
 \For {all $Q_i$, i$=1,\dots,n/2$}{ $S_{RS}$ invokes $[r]\leftarrow F^{i}_{Rand_2}()$ with respect to $Q_i$;}
 \For {all $Q_i$, $i=1\dots,n/2$}{ $S_{RS}$ computes:\\
 $\qquad[z_i]\leftarrow Add([x],[y])$; \tcp{Addition is a local computation}
 $\qquad[\bar{a}]\leftarrow Add((F^{i}_{Mul}([x],[r]), F^{i}_{Mul}(Add(1-[r]),[y])))$;\\
 $\qquad[\bar{b}]\leftarrow Add([z],[-\bar{a}])$;\\
 $S_{RS}$ adds $(\bar{a}_i,\bar{b}_i)$ to the $output_i$ of party $P_i$ and writes $(P_i,(\bar{a}_i,\bar{b}_i))$ on the tape of $Z$;
}
}
\end{algorithm}

 The composability of {\ttfamily Random-swap} in the $F_{Mul}$ and $F_{Rand_2}$-hybrid model is similar to the protocol in \autoref{randomswap}, except that every call to the real protocol {\ttfamily Mul} or $Rand_2$ is replaced by the call to the $Ideal$ functionalities $F_{Mul}$ or $F_{Rand_2}$, respectively. The simulation is given in \autoref{S-randomswap}. Let $A$ be the adversary for protocol {\ttfamily Random-swap}. $A$ interacts with $n$ parties and accesses to $n/2$ copies of the $Ideal$ functionalities $F_{Rand_2}$ and $F_{Mul}$. Given $A$, the simulator $S_{RS}$ simulates the real execution of the protocol {\ttfamily Random-swap} for $A$ as follows:
\begin{enumerate}
    \item Let $Q_i$ be the quorum. Let $\mathcal{P}_{C_i} \subset Q_i$ be the set of corrupted parties.
    \item For every gate $G_i$, $i=1,\dots,n/2$, there are $n/2$ copies of $Ideal$ functionalities $F^{i}_{Rand_2}$, and $F^{i}_{Mul}$ corresponds to the parties $P_i$.
\end{enumerate}

If $A$ corrupts a party $P_j\in \mathcal{P}_{C_i}$ in some quorum $Q_i$, then $S_{RS}$ obtains the input $(x_i,y_i)$ from $P_j$ and passes the inputs to $A$. If $A$ instructs to corrupt the inputs, then $S_{RS}$ receives $(\bar{x}_i,\bar{y}_i)$ from $A$ manipulates the input of $P_j$. On the other hand, if $A$ does not corrupt $P_j$, then $S_{RS}$ randomly sets the input of $P_j$. Simulator $S_{RS}$ writes all inputs on the tape of $Z$.

For every quorum, $S_{RS}$ simulates the copy of the $Ideal$ functionalities $F^{i}_{Rand_2}$ and $F^{i}_{Mul}$ with their inputs. The output of the $Ideal$ functionalities are written on the tape of $Z$.

Simulation of {\ttfamily Mul} and $Rand_2$ in the $F_{VSS\text{-}share}$-hybrid modeling are secure \cite{Asharov,Damgard}. Both the protocols have calls to the $Ideal$ functionality of $F_{VSS\text{-}share}$, which generates uniform and random shares over $\mathbb{F}_p$. Simulator writes every inputs and output on the tape of $Z$. As the outputs are random variables over the field $\mathbb{F}_p$, $Z$ is unable to distinguish - {\em who writes on the tape? Is it the $Ideal$ functionality or the simulator?}


\begin{algorithm}[!hbt]
\caption{Functionality $F_{S\text{-}I}$}
\label{f-shuffle-I}
\Begin{
\tcc{Let the network has $n=2^d$ inputs. Then there are $\frac{n}{2}(2d-1)$ {\ttfamily Random-swap} gates in the network. The input lines of the network are assigned to the  gates labeled as $G_1,G_2,\dots,G_{n/2}$.} 
\underline{\bf Network setup:}\\
\For{$i=1,\dots,n/2$}{
    Parties $P_{2i-1}$ and $P_{2i}$ {\ttfamily VSS-share} their secret to quorum $Q_i$;\\
    \If{Party $P_{2i-1}$ (or $P_{2i}$) does not shares the secret}{ Set the secret as $0$}
}
\underline{\bf Routing:}\\
 \For{every layers}{
    \For{every quorum $Q_i$}{
        Trusted party receives $([x_i],[y_i])$ from $Q_i$;\\
        Trusted party invokes $F_{RS}$;\\
        For first output line, trusted party invokes $F_{VSS\text{-}share}$ to quorum $Q_{k}^{1}$;\\
        For second output line, trusted party invokes $F_{VSS\text{-}share}$ to quorum $Q_{k}^{2}$; 
  }
 }
}
\end{algorithm}

\subsubsection{Protocol {\ttfamily Shuffle-I}:} There are $n$ parties. The parties already form $n$ quorums where no party is in more than $O(\log{n})$ quorums \cite{King}. There exists a $n$-permute (or symmetric $n$-permute) structure with $n$ inputs. The structure contains at most $\frac{n}{2}(2\log{n}-1)$ gates which are arranged in $(2\log{n}-1)$ layers. Each gate $G_j$ is assigned to some quorum $Q_i$, where $i\equiv j\mod{n}$. The first layer contains $n/2$ gates, and are indexed as $G_1,\dots,G_{n/2}$. The corresponding quorums are $Q_1,\dots,Q_{n/2}$, respectively. The gate $G_i$ (for $i=1,\dots,n/2$) receives the inputs from $P_{2i-1}$ (first input line) and $P_{2i}$ (second input line). Thus, $P_{2i-1}$ and $P_{2i}$ {\ttfamily VSS-share} their secrets to the parties $P_j \in Q_i$. In the subsequent layers, the gates receive inputs from the preceding layer. Let gate $G_i$ delivers its first output line to the gate $G_k$ and second output line to $G_{\frac{n}{2}+k}$. Also, let for $Q_{\bar{k}}$ and $Q_{\bar{\frac{n}{2}+k}}$ be the corresponding quorums associate to $G_k$ and $G_{\frac{n}{2}+k}$, respectively. Every party $P_j\in Q_i$ {\ttfamily VSS-shares} the outputs to quorums $Q_{\bar{k}}$ (first output line) and $G_{\frac{n}{2}+k}$ (second output line), respectively.

The $Ideal$ functionality of {\ttfamily Shuffle-I} is shown in \autoref{f-shuffle-I}. The functionality is defined in two phases: the network setup phase and the routing phase. The network setup phase receives the input from the parties, and the routing phase shuffles the input by traveling thorough the network. In every layer, the output of the gates are {\ttfamily VSS-shared} to the next layer. Since {\ttfamily VSS-share} produces random shares of the secret, all the outputs are random and uniformly distributed over $\mathbb{F}_p$. 

Let $A$ be the adversary who can corrupts $t<n/2$ parties. We assume that {\ttfamily Quorum-Gen} produces {\em good} quorums where the majority of the parties in each quorum are honest. $A$ interacts with $n$ parties which are executing the protocol {\ttfamily Shuffle-I} with access to multiple  copies of the $Ideal$ functionality $F_{RS}$. Given $A$, simulator $S_{S\text{-}I}$ simulates the real execution of protocol {\ttfamily Shuffle-I} for the adversary $A$ as follows:
\begin{enumerate}
    \item Simulator $S_{S\text{-}I}$ executes the structure layer-by-layer. In each layer, the quorums operate synchronously. If some quorum fails to deliver within a fixed time, the protocol is aborted.
    \item Let $Q_i$ be a quorum. The quorum may contain some corrupt parties. Let $\mathcal{P}_i \subset Q_i$ be the subset of corrupted parties in $Q_i$. We consider that $|\mathcal{P}_i| < \frac{|Q_i|}{2}$.
    \item For $i=1,\dots,n/2$, there are $n/2$ copies of $Ideal$ functionalities $F^{i}_{RS}$ correspond to quorum $Q_i$ in each layer.
\end{enumerate}

For every gate of the first layer, simulator $S_{S\text{-}I}$ obtains the input from party $P_{2i-1}$ and $P_{2i}$. Since $G_i$ is assigned to quorum $Q_i$, the parties do the following:
\begin{itemize}
    \item If $P_i\in Q_i$ is a corrupted player, then the simulator obtains the input $(x_i,y_i)$ from $P_i$ and passes the input $(x_i.y_i)$ to $A$. If $A$ instructs to corrupt the input to $(\bar{x}_i,\bar{y}_i)$, then simulator sets the input as $(x_i,y_i)=(\bar{x}_i,\bar{y})$. 
    \item Simulator writes $(P_i,(x_i,y_i))$ on the tape of $Z$.
    \item If $P_i\in Q_i$ is not corrupted, then the simulator randomly sets the input as $(x_i,y_i)$, and writes $(P_i,(x_i,y_i))$ on the tape of $Z$. 
\end{itemize}

The composability of {\ttfamily Shuffle-I} in $F_{S\text{-}I}$-hybrid  model is similar to the protocol in \autoref{shuffle-I}, except that every call to the real protocol {\ttfamily Random-swap} is replaced with the call to the Ideal functionalities $F_{RS}$. The simulation is given in \autoref{S-simI}. Simulator $S_{S\text{-}I}$ writes every $views$ of the parties on the tape of the {\em environment}. Since $F_{RS}$ outputs random variables over the field $\mathbb{F}_p$, the {\em environment} is unable to distinguish - {\em who writes on the tape? Is it the Ideal functionality or the simulator?}


\begin{algorithm}[!hbt]
\caption{{\em Environment} $Z$, with Adversary $A$ and Simulator $S_{S\text{-}I}$}
\label{S-simI}
\Begin{
\underline{\bf Network setup:}\\
\For{Party $P_j\in \mathcal{P}$}{
    \If{$P_j \in \mathcal{P}_C$}{
        $S_{S\text{-}I}$ obtains the secret $x_j$ of $P_j$ and passes to $A$;\\
        \If{$A$ corrupts $P_j$}{
            $A$ corrupts the secret of $P_j$ as $\bar{x}_j$ and $S_{S\text{-}I}$ resets $x_j = \bar{x}_j$.
        }
    }
    \Else{ $S_{S\text{-}I}$ randomly sets the input $x_j$ for party $P_j$;}
    \For{all $i=1,\dots,n/2$}{
        $S_{S\text{-}I}$ calls $F_{VSS\text{-}share}^{i}$ with respect to the quorum $Q_i$;
    }
}
\underline{\bf Simulation:}\\
\For{every layer }{
    \For{every $P_j \in Q_i$}{
        \If{$P_j \in \mathcal{P}_C$}{
            $S_{S\text{-}I}$ obtains the input $(x_j,y_j)$ from $P_j$ and passes the input to $A$.\\
            \If{$A$ corrupts $P_j$}{
                $A$ corrupts the input as $(\bar{x}_j,\bar{y}_j)$ and $S_{S\text{-}I}$ resets $P_j$'s secret as $(x_j,y_j) = (\bar{x}_j,\bar{y}_j)$.
           }
           $S_{S\text{-}I}$ writes $(P_j,(x_j,y_j))$ on the tape of $Z$.
       }
       $S_{S\text{-}I}$ randomly sets the input as $(x_j,y_j)$ for $P_j$ and writes $(P_j,(x_j,y_j))$ on the tape of $Z$.
   }
    \For{all $i=1,\dots,n/2$}{
        $S_{S\text{-}I}$ calls $F_{RS}^{j}$ \tcc{This is a concurrent operation.}
    }
    \If{not the last layer}{
       \If{$P_j\in \mathcal{P}_C$ and $A$ corrupts $P_j$}{
            $S_{S\text{-}I}$ obtains the $output_j$ of $P_j$ from $A$.
        }
        $Q_{k}^{1}$ and $Q_{k}^{2}$, respectively \tcc{This is a concurrent operation.}
    }
}
}
\end{algorithm}

\subsubsection{Protocol {\ttfamily BinaryRiffle} and {\ttfamily Shuffle-II:}} 
The two protocols {\ttfamily BinaryRiffle} and {\ttfamily Shuffle-II} are similar to {\ttfamily Shuffle-I} except the underlying structure.
\begin{lemma}
{\ttfamily BinaryRiffle} and {\ttfamily Shuffle-II} are UC-securely computable.
\end{lemma}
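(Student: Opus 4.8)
The plan is to treat the two protocols one after the other, in each case (i) fixing the relevant $Ideal$ functionality, (ii) exhibiting a straight-line black-box simulator in an appropriate hybrid model, and (iii) invoking the modular composition theorem \cite{Canetti00} together with \autoref{thr1}.

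First I would handle {\ttfamily BinaryRiffle}. I would define a functionality $F_{BR}$ by analogy with $F_{RS}$ (\autoref{F-randomswap}) and $F_{S\text{-}I}$ (\autoref{f-shuffle-I}): the trusted party receives the $n$ {\ttfamily VSS-shared} inputs, unrolls the recursion of \autoref{binaryriffle} into a fixed sequence of $\log n_2$ layers of {\ttfamily Random-swap} gates --- the recursion tree is data-independent, so both the unrolling and the gate-to-quorum map $i\equiv j\bmod n$ are public --- and for each gate invokes $F_{RS}$ followed by $F_{VSS\text{-}share}$ toward the two successor quorums. In the $F_{RS}$-hybrid model the protocol {\ttfamily BinaryRiffle} is obtained from \autoref{binaryriffle} by replacing every real {\ttfamily Random-swap} call with a call to $F_{RS}$, and the simulator $S_{BR}$ proceeds exactly as $S_{S\text{-}I}$ of \autoref{S-simI}: layer by layer it collects the (possibly adversarially altered) inputs of the corrupted parties inside each quorum, relays them to $A$, writes every $view$ on the tape of $Z$, invokes the concurrent copies of $F_{RS}$, and aborts the run on both sides whenever a quorum fails to deliver in time. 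Since $F_{RS}$ returns uniformly random shares over $\mathbb{F}_p$ and at most $t<n/3$ parties --- a minority of every {\em good} quorum --- are corrupted, $Z$'s view under $F_{BR}$ equals its view in the $F_{RS}$-hybrid execution. One modular-composition step then replaces $F_{RS}$ by the real {\ttfamily Random-swap}, which is UC-secure by the argument following \autoref{S-randomswap}, and the composite simulator is still straight-line black-box, so \autoref{thr1} gives UC-security of {\ttfamily BinaryRiffle}.

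Next I would handle {\ttfamily Shuffle-II}. Define $F_{S\text{-}II}$ as the two-phase functionality of \autoref{shuffle-II}: on input $X=(x_1,\dots,x_n)$ with $n=n_1n_2$, the trusted party runs $n_2$ independent copies of $F_{S\text{-}I}$ on the length-$n_1$ subsequences, reshares their outputs into the quorums feeding Phase-II, and then runs $F_{BR}$. In the $(F_{S\text{-}I},F_{BR})$-hybrid model the protocol {\ttfamily Shuffle-II} is precisely \autoref{shuffle-II} with Phase-I replaced by $F_{S\text{-}I}$ calls and Phase-II by $F_{BR}$, and the simulator $S_{S\text{-}II}$ is the concatenation of $n_2$ copies of $S_{S\text{-}I}$ with one copy of $S_{BR}$, again writing all $view$s on the tape of $Z$ and aborting when any quorum stalls. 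Because each $F_{S\text{-}I}$ emits fresh uniform shares and the resharing step only re-randomises them, the joint state handed to $F_{BR}$ has the same distribution in the $Ideal$ and hybrid executions, so $Z$ cannot distinguish; two more applications of the modular composition theorem --- first replacing the $F_{S\text{-}I}$ copies by the real {\ttfamily Shuffle-I} (UC-secure by the preceding subsection), then $F_{BR}$ by the real {\ttfamily BinaryRiffle} --- yield UC-security of {\ttfamily Shuffle-II}.

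I expect the main obstacle to be two structural points rather than any calculation. First, one must justify that the recursively-specified {\ttfamily BinaryRiffle} can be flattened into a fixed layered network whose gate-to-quorum map is public and independent of the data; this data-obliviousness is exactly what licenses the modular composition theorem here. Second, one must check that the straight-line black-box property survives the nested composition ({\ttfamily Random-swap} inside {\ttfamily BinaryRiffle} inside {\ttfamily Shuffle-II}): this holds because each composition layer only inserts oracle calls to functionalities that are themselves straight-line and never rewinds, but it is worth spelling out that the composed simulator still delivers a full round of simulated messages before advancing, so the hypothesis of \autoref{thr1} is preserved. The rest --- the corrupted-party input bookkeeping and the abort handling --- is essentially copied from $S_{S\text{-}I}$.
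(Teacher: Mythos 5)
Your proposal is correct and follows essentially the same route as the paper: the paper's proof is just the compressed observation that the simulator $S_{S\text{-}I}$ is independent of the underlying (data-oblivious, {\ttfamily Random-swap}-based) network, so substituting {\ttfamily BinaryRiffle} or the reduced $n_{\pi}$-permute structure yields $S_{BR}$ and the security of {\ttfamily Shuffle-II}. Your explicit definition of $F_{BR}$, the flattening of the recursion into public layers, and the hybrid-model composition via \autoref{thr1} simply spell out the details the paper leaves implicit.
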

\begin{proof}
{\ttfamily Shuffle-I} is UC-secure under the straight-line  black-box simulation. The simulator $S_{S\text{-}I}$ is independent of the underline structure (network). If the underline structure is replaced by {\ttfamily BinaryRiffle} then it provides the UC-security of the simulator $S_{BR}$. Similarly, the UC-security of {\ttfamily Shuffle-II} is obtained.
\end{proof}

\subsection{Why Permutation Network?}

The prior MPS protocols ( e.g. \cite{shuffle2,HowToShuffle,RoundEfficient}) are based on sorting network. The commonly used sorting networks are having $O((\log{n})^2)$ layers \cite{Batcher2}. Ajtai et al. proposed an $O(\log{n})$ layer sorting network for large number of inputs \cite{AKS}. However, their protocol requires large number of inputs. Donald E. Knuth commented in \cite{Knuth} that - 
\begin{quote}
    {\em The networks they constructed are not for practical interest, since many components were introduced just to save a factor of $\log{n}$; Batcher's method is much better, unless $n$ exceeds the total memory capacity of all computers on earth!}
\end{quote}
Later on Leighton et al. proposed an $O(\log{n})$ layers network that usually sorted the input sequence with high probability. Their protocol \cite{Leighton} is based on the butterfly tournament protocol. Nevertheless, the protocol \cite{Leighton} defines a sorting network with $7.44\log{n}$ layers.

A $n$-permute network produces a random permutation of the input sequence. $n$-permute networks (e.g. \cite{Benes,Waksman,clos}) are having $O(\log{n})$ layers. Our implementation is based on Bene\v{s} network that has $2\log{n}-1$ layers. Therefore, $n$-permute networks are more suitable than the sorting networks. However, $n$-permute network does not produce uniform distribution of permutations. That is, some of the permutations are more likely that the others. For example, \autoref{distribution} shows the distribution of $8$-permute network. We find that, for all possible configurations the mean of occurring a permutation is $26.0063$ and the standard deviation is $1.622$.

\section{Conclusion}
In this paper, we have presented two MPS protocols. The protocols are based on permutation networks. The first MPS protocol emulates the Bene\v{s} network. We generalize the Bene\v{s} structure for any even number of inputs and design a multiparty protocol to shuffle the inputs. We find that when the number of inputs is large, say greater than $2^8$, shuffling becomes inefficient. Our second MPS protocol is designed for large number of inputs. We propose a $n_{\pi}$-permute network to reduce the number of layers of the network. However, $n_{\pi}$-permute network produces less than $n!$ permutations. 

The building block of the shuffling protocol is {\ttfamily Random-swap} gate. We design a multipaty {\ttfamily Random-swap} gate that swaps the inputs with probability $\frac{1}{2}$, without learning the inputs. Moreover, an observer can not distinguish whether the inputs are swapped or passed through. Each gate is operated by a quorum of parties. We consider that at most $1/3$ of the parties may be corrupted.

Instead of sorting network, we use permutation network for shuffling. Unlike the {\ttfamily Compare-swap} gate in sorting network, the permutation network uses {\ttfamily Random-swap} gates.  We find that {\ttfamily Random-swap} gate is more efficient than {\ttfamily Compare-swap}. There are three rounds of communication in the {\ttfamily Random-swap} gate, whereas {\ttfamily Compare-swap} incurs $O(\log{p})$ rounds of communications where $p$ is the size of the underlying field.


\bibliographystyle{plain}

\end{document}